\DeclareMathOperator*{\argmin}{argmin}
\newlist{myitemize}{itemize}{3}
\setlist[myitemize,1]{label=1.,leftmargin=1em}
\setlist[myitemize,2]{label=$\rightarrow$,leftmargin=0.75em}
\setlist[myitemize,3]{label=$\diamond$}
\newcolumntype{C}[1]{>{\centering\arraybackslash}p{#1}}
\def\endthebibliography{%
  \def\@noitemerr{\@latex@warning{Empty `thebibliography' environment}}%
  \endlist
}
\theoremstyle{definition}
\newtheorem{definition}{Definition}
\newtheorem{theorem}{Theorem}
\newtheorem{lemma}{Lemma}
\newtheorem{corollary}{Corollary}
\newcommand{\PauliX}{\M{X}}
\newcommand{\PauliY}{\M{Y}}
\newcommand{\PauliZ}{\M{Z}}
\pgfplotsset{compat=1.17}
\begin{document}

\title{Spanning Tree Matching Decoder \\ for Quantum Surface Codes}



\author{Diego Forlivesi,~\IEEEmembership{Student~Member,~IEEE,}
Lorenzo~Valentini,~\IEEEmembership{Member,~IEEE,}
        and~Marco~Chiani,~\IEEEmembership{Fellow,~IEEE}
\thanks{The authors are with the Department of Electrical, Electronic, and Information Engineering ``Guglielmo Marconi'' and CNIT/WiLab, University of Bologna, V.le Risorgimento 2, 40136 Bologna, Italy. E-mail: \{diego.forlivesi2,lorenzo.valentini13, marco.chiani\}@unibo.it. 
Work funded in part by the European Union - Next Generation EU, PNRR project PRIN n. 2022JES5S2. 
}
}

\maketitle 
\markboth{}{Forlivesi, Valentini, Chiani: Spanning Tree Matching Decoder for Quantum Surface Codes}

\begin{acronym}
\small
\acro{AWGN}{additive white Gaussian noise}
\acro{BCH}{Bose–Chaudhuri–Hocquenghem}
\acro{CDF}{cumulative distribution function}
\acro{CRC}{cyclic redundancy code}
\acro{LDPC}{low-density parity-check}
\acro{ML}{maximum likelihood}
\acro{MWPM}{minimum weight perfect matching}
\acro{QECC}{quantum error correcting code}
\acro{PDF}{probability density function}
\acro{PMF}{probability mass function}
\acro{MPS}{matrix product state}
\acro{WEP}{weight enumerator polynomial}
\acro{WE}{weight enumerator}
\acro{BD}{bounded distance}
\acro{QLDPC}{quantum low density parity check}
\acro{CSS}{Calderbank, Shor, and Steane}
\acro{MST}{minimum spanning tree}
\acro{PruST}{pruned spanning tree}
\acro{RFire}{Rapid-Fire}
\acro{UF}{union-find}
\acro{LEMON}{library for efficient modeling and optimization in networks}
\acro{STM}{spanning tree matching}
\acro{i.i.d.}{independent identically distributed}

\end{acronym}
\setcounter{page}{1}

\begin{abstract}
 We introduce the \ac{STM} decoder for surface codes, which guarantees the error correction capability up to the code's designed distance by first employing an instance of the \acl{MST} on a subset of ancilla qubits within the lattice. Then, a perfect matching graph is simply obtained, by selecting the edges more likely to be faulty. A comparative analysis reveals that the \ac{STM} decoder, at the cost of a slight performance degradation, provides a substantial advantage in decoding time compared to the \ac{MWPM} decoder. Finally, we propose an even more simplified and faster algorithm, the \ac{RFire} decoder, designed for scenarios where decoding speed is a critical requirement. 

\end{abstract}

\begin{IEEEkeywords} Quantum Error Correcting Codes, Quantum Communications, Quantum Computing, Surface Codes
\end{IEEEkeywords}

\section{Introduction}

Leveraging the unique characteristics of quantum mechanics has considerably expanded the possibilities within the realm of information management, encompassing various domains such as sensing, processing, and communication \cite{Pre:18,CacCalVan:20, Zam:23}. The key hurdle in constructing a quantum computer is the inevitable presence of errors that, if not addressed, rapidly degrade quantum information. Therefore, error correction is crucial for meaningful quantum computation \cite{Sho:95, Got:09, NieChu:10, Pfi:23, ZorDePGio:23}. Surface codes are considered central to the architecture for the first generation of quantum computers, thanks to their high error thresholds, planar structure, and locality \cite{BraKit:98,  Iol:24}. The \acf{MWPM} decoder is presently the most widely used decoder for surface codes \cite{Higg:22, Hig:23, Bro:23}. This decoder results in large threshold error rates, but in practice, its polynomial time complexity introduces a latency that can be a bottleneck for fault-tolerant quantum computing architectures \cite{Got:09, kol:09, Ter:15}. Considerable efforts have been invested in optimizing the performance of this decoder \cite{Fow:13,Sko:23}. Sub-optimal algorithms based on the \ac{UF} decoder have also been proposed, achieving almost linear time in code length \cite{Del:20, Del:21}. Additionally, there is ongoing research into neural network-based solutions \cite{Gic:23}.


In this letter, we present a fast decoding technique tailored for surface codes, called the \acf{STM} decoder. The \ac{STM} algorithm involves implementing an instance of the \ac{MST} on a subset of the ancilla qubits within the lattice. This is followed by a simple and fast construction of a perfect matching graph, resulting in the estimated error pattern. Finally, we propose an even more simplified and faster algorithm, the RFire decoder, tailored for situations where decoding speed is of paramount importance. We conduct a comparative analysis of the performance, considering logical error rates and execution times, between the \ac{STM}, \ac{RFire}, and \ac{MWPM} decoders. We show that, at the price of some performance degradation, the proposed algorithms offer significant advantages in terms of decoding time.

\section{Preliminaries and Background}
\label{sec:preliminary}

\subsection{Quantum Stabilizer Error-Correcting Codes}

The Pauli operators are denoted as $\PauliX, \PauliY$, and $\PauliZ$. 
We indicate with $[[n,k,d]]$ a  \ac{QECC} encoding $k$ logical qubits $\ket{\varphi}$ into a codeword of $n$ data qubits $\ket{\psi}$, with minimum distance $d$. The code allows the correction of all patterns with up to $t = \lfloor(d-1)/2 \rfloor$ data qubit errors. 

Employing the stabilizer formalism, each code is characterized by $n-k$ independent and commuting operators $\M{G}_i \in \mathcal{G}_n$, termed stabilizer generators or simply generators, with $\mathcal{G}_n$ being the Pauli group on $n$ qubits \cite{Got:09, NieChu:10}.
The subgroup of $\mathcal{G}_n$ generated by all combinations of the $\M{G}_i \in \mathcal{G}_n$ is called stabilizer and indicated as $\mathcal{S}$. 
The code $\mathcal{C}$ is the set of quantum states $\ket{\psi}$ stabilized by $\mathcal{S}$, i.e., satisfying 
$\M{S}\ket{\psi}=\ket{\psi} \, \forall \M{S} \in \mathcal{S}$, or, equivalently, $\M{G}_i \ket{\psi}=\ket{\psi},\, i=1, 2, \ldots, n-k$.
The operators that commute with the stabilizer group but are not part of it are called logical operators.  
The generators specify measurements on quantum codewords that do not disturb the original quantum state. 
These measurements are carried out using extra qubits, named ancillas.
In fact, assume an error $\M{E}\in \mathcal{G}_n$ affects a codeword, so that the state becomes  $\M{E}\ket{\psi}$. It is possible to extract a binary sequence $\V{s}$ (also referred to as error syndrome) where the $i$-th entry $s_i$ is zero if $\M{G}_i$ commutes with $\M{E}$, while $s_i=1$ if $\M{G}_i$ anticommutes with it.
This enable the possibility to perform quantum error correction through error syndrome decoding using as input the binary sequence $\V{s}$.
We will refer to ancillas measuring $s_i = 1$ as \emph{defects}.

Among stabilizer codes we find the surface codes. These codes arrange qubits on a planar sheet \cite{BraKit:98, DenKitLan:02, HorFowDev:12, AtaTucBar:21}. 
One of the advantages of such an arrangement, is that it requires only nearest-neighbor interactions between qubits.
Also, this structure enables the possibility to perform a single round of stabilizer measurements with parallel operations~\cite{BluDolEve:23}. Logical operators can be easily identified on the surface codes: $\M{Z}_L$ ($\M{X}_L$) operator consists of a tensor product of $\M{Z}$'s ($\M{X}$'s) crossing horizontally (vertically) the lattice \cite{DenKitLan:02}. 
In the case of a \ac{QECC} over the depolarizing channel with data qubit error probability $p \ll 1$, we can approximate the logical error rate
as \cite{ForValChi24:MacW}
\begin{align}
\label{eq:error_probWithBetaApprox}
p_\mathrm{L} 
&\approx \left(1-\beta_{t+1}\right) \binom{n}{t+1}p^{t+1} 
\end{align}
where $\beta_{t+1}$ 
is the fraction of errors of weight $t+1$ that the decoder is able to correct.

\subsection{Minimum Weight Perfect Matching}
\label{subsec:MWPM}

A key characteristic of surface codes is the availability of a minimum weight decoder, known as the \ac{MWPM}. This syndrome decoder constructs a graph in which vertices represent defects, and edges are assigned weights depending on the error probability of the qubits separating the pair. For instance, in case of \ac{i.i.d.} data qubit errors, these weights correspond to the number of qubits between the pair.
In general, such a weighting procedure is performed using the Dijkstra algorithm to connect the defects on the graph representing the full lattice~\cite{Higg:22}.
By suitably assigning distances on the lattice, Dijkstra's algorithm allows for the consideration of different systems and qubit error statistics.
For surface and rotated surface structures~\cite{HorFowDev:12, ForValChi:23}, considering \ac{i.i.d.} data qubit errors, this step can be greatly simplified adopting the Manhattan distance between defects.
Then, the \ac{MWPM} proceeds by matching the graph to estimate the data qubit errors. A matching of a graph is a set of edges such that no two edges in the matching share a common vertex. A perfect matching is a matching that includes all vertices in the graph \cite{Die:08}.
For topological quantum codes without boundaries (toric codes), 
we can directly search for perfect matches using the \ac{MWPM} algorithm on the resulting graph \cite{kol:09}.
For surface codes, which have boundaries, it is necessary to add ghost defects (also known as ghost ancillas~\cite{Had:08}) before applying Dijkstra's algorithm.
This is because errors occurring along a boundary excite only one ancilla, leading to an odd number of defects which impedes running the \ac{MWPM}.
Therefore, for each defect, a corresponding ghost defect is considered by the decoder. 
In the final graph, these defects are also connected between themselves with zero distance in order to preserve the error correction capability of the code~\cite{Had:08}. 

The decoder based on the \ac{MWPM} is able to guarantee the error correction capability $t$.
In addition, it is also able to correct a significant amount of error patterns of weight greater than $t$. 
As an example, for the $[[9, 1, 3]]$ rotated surface code over the depolarizing channel, we have $\beta_2 = 0.5$, i.e., the decoder is able to correct $50\%$ of the error patterns of weight two \cite{ForValChi:23}.
As implemented in the \ac{LEMON}, the \ac{MWPM} algorithm complexity is $O(N M \, \log N)$, where $N$ and $M$ represent the number of vertices and edges in the graph, respectively \cite{DezBalJut:11}.
Specifically, for a toric code we have $N = n_\mathrm{d}$ and $M = \binom{n_\mathrm{d}}{2}$, while for a surface code $N = 2\,n_\mathrm{d}$ and $M = \binom{2\,n_\mathrm{d}}{2}$ due to ghost ancillas, where $n_\mathrm{d}$ denotes the number of defects.

Since planar architectures are favorable from an implementation point of view, we will focus on surface codes in the following. 
The \ac{STM} decoder for toric codes can be easily constructed by straightforward variations on the algorithm described below. 

\section{Spanning Tree Matching Decoder}
\label{sec:MSP}
In this Section, we introduce the \ac{STM} decoder for surface codes. 
First, from \eqref{eq:error_probWithBetaApprox} we observe that, for a code with given $n$ and $t$, the error correction capability is determined by the fraction of errors of weight $ w = t + 1$ that the decoder is able to correct.
Since the surface codes belong to the class of \ac{CSS} codes, for the sake of simplicity we will refer to the lattice where the sites constitute the $\M{X}$ generators. 
The same reasoning can be applied to the dual lattice. 

The \ac{STM} decoder consists of three phases: \ac{MST} evaluation, tree matching procedure, and error correction.

\subsection{Minimum Spanning Tree Phase}
\label{sec:MST}

We first construct the complete graph $\mathcal{G} = (n_\mathrm{d},\binom{n_\mathrm{d}}{2})$ connecting all defects, by using Dijkstra or the Manhattan distances. 
Note that, since there are no ghost ancillas at this stage, this graph has fewer edges and vertices than the one used by the \ac{MWPM}. 

Then, we execute the \ac{MST} algorithm on the graph.
This can be achieved with a complexity of $O(M\log N)$, where $M= \binom{n_\mathrm{d}}{2}$ and $N = n_\mathrm{d}$ stand for the edges and the vertices of the graph, respectively \cite{Cor:22}.
At this stage, we construct two \acp{MST} starting from the obtained one. 
If the number of defects is already even, one of the output trees is the original one, while the other is derived by introducing a ghost defect on both the left and right sides.
If the number of defects is odd, an output tree is obtained by adding a ghost defect to the left, while the alternative output tree is built by adding a ghost defect to the right. Then, the added ghost defects are connected to the nearest defect in the lattice. An example of this phase is reported in Fig.~\ref{Fig:MinSpanTree}(b).
In case of multiple defects with a ghost defect at the same distance, we select the one with the highest minimum distance to other non-ghost defects.
\begin{figure}[t]
 	\centering
 	\resizebox{\columnwidth}{!}{
 	   \input{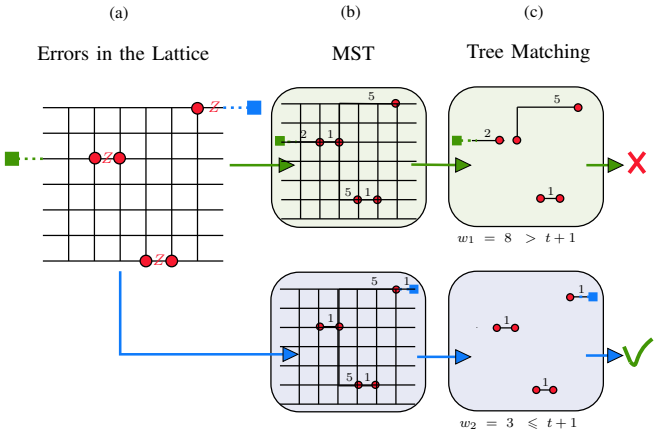}
     }
 	\caption{ Spanning tree matching decoder with a $[[85,1,7]]$ surface code. a) Three $\M{Z}$ channel errors occur on the lattice. Exited ancillas are depicted in red. b) Two alternative \acp{MST} obtained with the nearest ghost ancilla to the left (above) and to the right (below) boundary, respectively. c) Resulting $\mathcal{E}$ from the tree matching procedure.}
 	\label{Fig:MinSpanTree}
\end{figure}
%


\subsection{Tree Matching Phase}
\label{sec:Pru}

During this step, we match the trees to obtain the estimated error patterns. For a tree $\mathcal{T}$ let us define as $\mathcal{E}$ its perfect matching graph.
We use $\text{adj}(v, \mathcal{G})$ to indicate the set of vertices adjacent to the vertex $v$ in the graph $\mathcal{G}$, and $\text{deg}(v, \mathcal{G})$ for the degree of $v$ in the graph $\mathcal{G}$. $\mathcal{B}(\mathcal{T})=\{v \in \mathcal{T} \, | \,\text{deg}(v, \mathcal{T})=1\}$ is the set of boundary vertices in  $\mathcal{T}$.

A simple algorithm for obtaining the perfect matching of an \ac{MST} consists of the following steps. 
\begin{myitemize}
    \item[1.] For each $b \in \mathcal{B}(\mathcal{T})$ 
    \begin{myitemize}
        \item[] Let $a=\text{adj}(b, \mathcal{T})$;
        \item[] If $\text{deg}(a, \mathcal{T}) = 2$: call the edges $e_1 = (a, b)$ and $e_2 = (a, \, \text{adj}(a, \mathcal{T}) \setminus \{b\})$. Add $e_1$ to $\mathcal{E}$, and remove the subgraph $(\{ a, b\}, \{e_1, e_2\})$ from $\mathcal{T}$.
    \end{myitemize}
    
    \item[2.] For each $b \in \mathcal{B}(\mathcal{T})$ 
    \begin{myitemize}
        \item[] Let $a=\text{adj}(b, \mathcal{T})$;
    
        \item[] If $\text{deg}(a, \mathcal{T}) = 3$: call $\{v_1,v_2\}= \text{adj}(a, \mathcal{T}) \setminus \{b\}$, and the edges $e_1 = (a, b)$, $e_2 = (a, v_1)$, $e_3 = (a, v_2)$. Add the edge $e_1$ to $\mathcal{E}$ and remove from $\mathcal{T}$ the subgraph $(\{ a, b\}, \{e_1, e_2, e_3\})$. Insert the edge $e = (v_1, v_2)$ with weight $w(e) = w(e_2) + w(e_3)$ in $\mathcal{T}$. Return to step 1.
    \end{myitemize}
    \item[3.] For each $b \in \mathcal{B}(\mathcal{T})$ 
    \begin{myitemize}
    \item[] Let $a=\text{adj}(b, \mathcal{T})$;
    
    \item[] If $\text{deg}(a, \mathcal{T}) = 4$: remove from $\mathcal{T}$ the edge $e = (a, v)$ where $v$ is the sole vertex in $\text{adj}(a, \mathcal{T})$ with $\text{deg}(v, \mathcal{T}) > 1$. Return to step 2.
    \end{myitemize}
\end{myitemize}

Note that, considering \ac{i.i.d.} data qubit errors, a vertex $v$ in the \ac{MST} cannot have $\text{deg}(v, \mathcal{T}) > 4$.

The algorithm terminates when the graph $\mathcal{T}$ becomes empty, and gives a set of edges $\mathcal{E}$ representing a valid solution to the error correction problem. This procedure is carried out for both the \acp{MST} obtained in Section~\ref{sec:MST}, and the two solutions are indicated as $\mathcal{E}_1$ and $\mathcal{E}_2$.
Examples of the tree matching procedure described above are depicted in Fig.~\ref{Fig:MinSpanTree}(c) and in Fig.~\ref{Fig:MinSpanTreee}(c). 

\subsection{Error Correction}
\label{sec:CEC}

After the previous phase we have two possible sets of faulty qubits, $\mathcal{E}_1$, $\mathcal{E}_2$, with a total number of data qubit errors $w_1$ and $w_2$, respectively. An instance of \ac{STM} decoding, involving both \acp{MST}, is illustrated in Fig.~\ref{Fig:MinSpanTree}.
If $w_1 \le t+1$ or $w_2 \le t+1$, then we have found the correction operator\footnote{Note that, if $w_1 \leq t$, the matching of the second tree can be avoided to save time.}. 

If, instead, both $w_1$ and $w_2$ have weight $> t+1$, a simple processing considering both $\mathcal{E}_1$ and $\mathcal{E}_2$ can be applied, which guarantees the correction if the channel errors has weight $\le t$ (see  Section~\ref{subsec:DPD}).  
The basic idea is choosing the solution with the smallest number of horizontally traversed edges.
This is motivated by the fact that logical operators traverse the lattice from one side to the other one.
Hence, the correction with more operators along the horizontal dimension will more likely cause a logical operator (i.e., an undetected error).
\begin{definition}
    A column is the set of horizontal edges aligned in the vertical direction of the lattice, as shown in Fig.~\ref{Fig:MinSpanTreee}(a).
\end{definition}
In this way, we can enumerate the columns from left to right ranging from $1$ to $d$.
Then, we can define two vectors $\V{c}_{i}$ with entries $c_{i,j}$, where $i=1, 2$ and $j=1, \dots, d$, representing the cardinality of the intersection between $\mathcal{E}_i$ and the $j$-th column.
Since a solution $\mathcal{E}_i$ with two edges in the $j$-th column is equivalent, by adding a stabilizer, to another solution without any edge in the $j$-th column, 
we also define as $\V{u}_{i}$ a vector with entries $u_{i,j} = c_{i,j} \mod 2$.
As a consequence, the function
\begin{align}
    \label{eq:metricCol}
    f(\mathcal{E}_i) = \sum_{j=1}^{d} u_{i,j}
\end{align}
can be used as a metric to quantify how much correcting $\mathcal{E}_i$ is likely to cause a logical operator.
Thus, the final solution is that minimizing \eqref{eq:metricCol}.
For instance, in Fig.~\ref{Fig:MinSpanTreee}(c) there are three edges in solution $\mathcal{E}_1$, with weights $w = 1$, $w = 2$, and $w = 5$. They consist of one, two, and three horizontal qubits, respectively. However, the horizontal qubit of the edge with weight $w = 1$ belongs to the same column as the edge with weight $w = 5$. Hence, $f(\mathcal{E}_i) = 4$.

 \begin{figure}
 	\centering
 	\resizebox{\columnwidth}{!}{
 	   \input{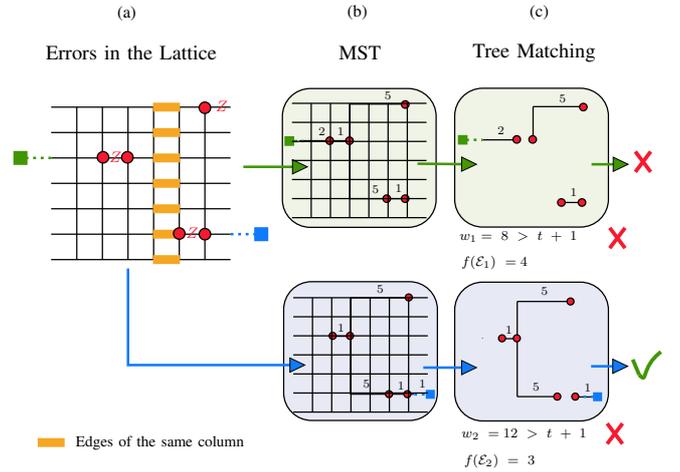}
     }
 	\caption{ Spanning tree matching decoder with a $[[85,1,7]]$ surface code. Both matched spanning trees have $w > t + 1$. Hence, the error correction is performed according to \eqref{eq:metricCol}.    }
 	\label{Fig:MinSpanTreee}
 \end{figure}

\subsection{Distance-preserving decoding}
\label{subsec:DPD}

In this section we show that \eqref{eq:metricCol}, in the case of a $[[n,k,d=2t +~1]]$ surface code, assures the correction of channel errors with weight $w \leq t$. 
\begin{lemma}
\label{thm:minpath}
Given a surface code, let us call $\mathcal{C}$ an arbitrary Pauli $\M{Z}$ error chain connecting two sites defects $v_1$ and $v_2$. Then, the intersection of $\mathcal{C}$ with any column between $v_1$ and $v_2$ has cardinality $1 \pmod 2$. The intersection with any of the other columns has cardinality $0 \pmod 2$. 
\end{lemma}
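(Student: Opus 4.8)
The plan is to reduce the statement to an elementary parity argument of Jordan-curve type. Fix the standard planar embedding of the code lattice, with the sites (the $\M{X}$ generators) at the vertices, the horizontal data qubits on the horizontal edges, and the vertical data qubits on the vertical edges; the columns of horizontal edges are numbered $1,\dots,d$ from left to right, and every site lies strictly between two consecutive columns, so I attach to each site $v$ the number $\lambda(v)\in\{1,\dots,d-1\}$ of columns lying to its left. For each $j\in\{1,\dots,d\}$ I would draw the vertical cross-cut $\ell_j$ of the code patch — which is topologically a disk — running from its top side to its bottom side through the interior of the $j$-th column. A quick inspection of the geometry shows that $\ell_j$ meets exactly the $\M{Z}$ data qubits of column $j$ and no other data qubit (the vertical qubits sit at integer horizontal coordinates, $\ell_j$ at a half-integer one), and that $\ell_j$ separates the patch into a ``left'' part containing the sites with $\lambda(v)<j$ and a ``right'' part containing those with $\lambda(v)\ge j$.

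With this in place the argument is three short steps. First, the counting identity: $|\mathcal{C}\cap(\text{$j$-th column})|$ equals the number of edges of $\mathcal{C}$ that the cross-cut $\ell_j$ crosses — this is immediate from the previous sentence. Second, the parity step: since $\mathcal{C}$ is a connected chain of errors joining $v_1$ to $v_2$, I traverse it as a walk from $v_1$ to $v_2$; each edge met by $\ell_j$ flips the side of $\ell_j$ on which the walk currently lies, and no other edge does, so the number of edges of $\mathcal{C}$ crossed by $\ell_j$ is odd precisely when $v_1$ and $v_2$ lie on opposite sides of $\ell_j$ and even otherwise. Third, translating the side condition back to the lattice: assuming without loss of generality $\lambda(v_1)\le\lambda(v_2)$, the two sites are separated by $\ell_j$ exactly when $\lambda(v_1)<j\le\lambda(v_2)$, i.e.\ exactly when column $j$ is one of the columns strictly between $v_1$ and $v_2$, and they lie on the same side of $\ell_j$ for every other $j$. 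Chaining the three steps gives $|\mathcal{C}\cap(\text{$j$-th column})|\equiv 1\pmod 2$ for columns between $v_1$ and $v_2$ and $\equiv 0\pmod 2$ for all others.

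The step I expect to need the most care is the use of the hypothesis ``$\M{Z}$ error chain connecting $v_1$ and $v_2$'': the parity step relies on $\mathcal{C}$ being a single connected path between the two defects. Indeed, the parity vector $(\,|\mathcal{C}\cap(\text{$j$-th column})|\bmod 2\,)_{j}$ is unchanged when a $\M{Z}$-stabilizer (a plaquette, which contains two qubits of one column and none of the others) is multiplied in, so it depends only on the homology class of $\mathcal{C}$ relative to the rough boundary; once the syndrome $\partial\mathcal{C}=v_1+v_2$ is fixed, the only remaining freedom is multiplication by a logical $\M{Z}$ (a chain running from the left rough boundary to the right one), which would flip the parity of \emph{every} column and break the statement — and this is exactly what ``connecting $v_1$ and $v_2$'' as a single path rules out. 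I would keep the elementary walk-traversal version rather than phrasing everything via relative homology, since it matches the way these chains arise from the tree-matching procedure.
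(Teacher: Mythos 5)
Your proof is correct, but it proceeds along a genuinely different route from the paper's. The paper argues by deformation: it observes that the minimal-weight chain joining $v_1$ and $v_2$ trivially has the stated column parities, and that any other chain with the same endpoints is obtained from it by multiplying in plaquette generators, each of which contributes two edges to a single column and therefore leaves every parity unchanged. You instead give a direct Jordan-curve argument: a vertical cross-cut $\ell_j$ through column $j$ meets exactly the edges of that column, so traversing $\mathcal{C}$ from $v_1$ to $v_2$ shows that the parity of $|\mathcal{C}\cap(\text{column }j)|$ records whether $\ell_j$ separates $v_1$ from $v_2$. The two arguments buy slightly different things. The paper's version is shorter and connects immediately to the stabilizer formalism, but it silently assumes that ``every other chain'' is reachable by plaquette multiplications alone --- which is false for chains differing from the minimal path by a logical $\M{Z}$ operator, precisely the case that would flip the parity of every column. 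Your final paragraph identifies this homological loophole explicitly and explains why the hypothesis that $\mathcal{C}$ is a single connected chain from $v_1$ to $v_2$ closes it; this makes your proof more careful than the one in the paper on exactly the point where the lemma could fail. Your crossing-number argument is also self-contained (it does not presuppose that the minimal chain ``clearly satisfies'' the claim). The only cosmetic issue is the indexing of $\lambda(v)$: boundary (ghost) sites have $\lambda=0$ or $\lambda=d$, so the range $\{1,\dots,d-1\}$ should be enlarged accordingly, but this does not affect the argument.
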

\begin{proof}
The minimal-weight chain linking $v_1$ and $v_2$ clearly satisfies the theorem. Every other chain can be obtained from this one, through the application of plaquette generators. Since each plaquette has two edges in the same column, it does not affect the modulo 2 counting. 
\end{proof}


\begin{lemma}
\label{lem:minmatch}
Given any error pattern over an $[[n,k,d]]$ surface code, resulting in an even number of site defects (including also ghost defects), all possible perfect matchings $\mathcal{M}$ have the same metric $f(\mathcal{M})$.
\end{lemma}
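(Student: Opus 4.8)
The plan is to show that, for a \emph{fixed} defect set, the parity $u_j(\mathcal{M})$ of the number of faulty qubits that a perfect matching places in column $j$ is actually a function of the defect set alone, so that $f(\mathcal{M})=\sum_{j=1}^{d}u_j(\mathcal{M})$ cannot depend on $\mathcal{M}$. I would use Lemma~\ref{thm:minpath} to turn each matching edge into a simple per-column parity contribution, and then telescope these contributions over the perfect matching.

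First I would fix a column index $j\in\{1,\dots,d\}$ and assign to every defect $v$ of the (fixed) set $D$ --- site defects and ghost defects alike --- a bit $g_j(v)\in\{0,1\}$ equal to $1$ iff $v$ lies on the right-hand side of column $j$, with left ghosts counted as lying to the left of \emph{every} column and right ghosts to the right of every column. The point of this labelling is that ``column $j$ lies between $v_1$ and $v_2$'' (in the sense of Lemma~\ref{thm:minpath}) holds exactly when $g_j(v_1)\neq g_j(v_2)$. Hence, by Lemma~\ref{thm:minpath}, for any $\M{Z}$ chain realizing a matching edge $e=(v_1,v_2)$ the number $c_j(e)$ of its qubits in column $j$ satisfies $c_j(e)\equiv g_j(v_1)\oplus g_j(v_2)\pmod 2$, and this value is independent of which chain is chosen for $e$.

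Next I would assemble the edge contributions. The correction associated with a perfect matching $\mathcal{M}$ of $D$ is the modulo-$2$ sum (symmetric difference) of the data-qubit chains of its edges; since symmetric differences and parities of cardinalities commute, overlaps between distinct chains are harmless and $c_j(\mathcal{M})\equiv\sum_{e\in\mathcal{M}}c_j(e)\pmod 2$. Combining this with the per-edge identity, and using that in a perfect matching every vertex of $D$ belongs to exactly one edge,
\begin{align}
c_j(\mathcal{M}) \;\equiv\; \sum_{(v_1,v_2)\in\mathcal{M}}\bigl(g_j(v_1)\oplus g_j(v_2)\bigr) \;\equiv\; \sum_{v\in D} g_j(v) \pmod 2 .
\end{align}
Thus $u_j(\mathcal{M})=c_j(\mathcal{M})\bmod 2$ depends only on $D$ and $j$, and summing over $j$ shows that $f(\mathcal{M})$ is the same for all perfect matchings of $D$, which is the claim.

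The steps I expect to need the most care are (i) making the side-label $g_j$ precise so that ``between'' really is the symmetric difference $g_j(v_1)\oplus g_j(v_2)$, including fixing the half-open convention at the two endpoints and checking the degenerate case where $v_1$ and $v_2$ share a horizontal coordinate; and (ii) the ghost-defect bookkeeping --- verifying that an edge joining two ghosts on the same boundary contributes $0$ to every column (consistent with its empty chain), while one joining ghosts on opposite boundaries contributes $1$ to every column, so that the telescoping identity above holds verbatim with $D$ taken to include all ghost defects. The rest is routine modulo-$2$ arithmetic.
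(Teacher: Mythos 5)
Your proof is correct, and it takes a genuinely different route from the paper's. The paper argues by \emph{local moves}: it verifies the four-defect case directly (the three pairings of defects at horizontal positions $j_1\le j_2\le j_3\le j_4$ all yield metric $(j_2-j_1)+(j_4-j_3)$ once Lemma~\ref{thm:minpath} is applied edge by edge and overlapping columns cancel modulo $2$), and then asserts that any two perfect matchings of the same defect set are connected by a sequence of two-edge swaps, each an instance of the four-defect case and hence metric-preserving. You instead exhibit a \emph{global invariant}: the handshake identity shows that the column-$j$ parity of any perfect matching equals $\sum_{v\in D}g_j(v)\bmod 2$, the parity of the number of defects lying to the right of column $j$, which manifestly depends only on the defect set. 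Your version buys an explicit closed form for the invariant and dispenses with the paper's unproved (though true) claim that two-edge swaps connect all perfect matchings; the paper's version is shorter but rests on that combinatorial fact. One point you flag yourself deserves emphasis: the lemma is true only under your convention that an edge joining ghosts on \emph{opposite} boundaries is realized by a lattice-crossing chain and hence contributes $1$ to every column. Under the zero-distance ghost--ghost convention inherited from the MWPM construction that edge would contribute $0$, and then matching two interior defects to each other rather than to their respective boundaries replaces $f$ by $d-f$ (the two corrections differ by a logical operator), falsifying the statement; the paper's proof implicitly makes the same choice as you by applying Lemma~\ref{thm:minpath} to every matched pair, but never says so.
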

\begin{proof}
    We label the horizontal position of the sites in the lattice, from left to right, as $0, 1, \ldots, d+1$, with $0$ for the left ghost defect and $d+1$ for the right ghost defect. 
    Vertically aligned sites share the same label. 
    Let us firstly examine the case where the lattice has four site defects. 
    Reordering them we obtain four indexes $0 \le j_1 \le j_2 \le j_3 \le j_4 \le d+1$, representing the site positions. 
    In this setup we can have three possible perfect matchings $\mathcal{M}_i$, with $i=1,2,3$. 
    Considering  $\mathcal{M}_1$ as the one connecting site $1$ with $2$, and site $3$ with $4$, by application of Lemma~\ref{thm:minpath}, we have that $f(\mathcal{M}_1) = (j_2 - j_1) + (j_4 - j_3)$. It is easy to check that the other two matchings have the same metric. 
    Finally, we observe that each perfect matching $\mathcal{M}$ in any graph $\mathcal{G}$ with an even number of defects can be obtained by iteratevely removing two edges and reconnecting them as needed. 
    Hence, the claim follows since this operation does not affect the metric $f(\mathcal{M})$.
\end{proof}



\begin{corollary}
   \label{cor:min}
   Each perfect matching has the same metric \eqref{eq:metricCol} as the \ac{MWPM}.
    It follows that \eqref{eq:metricCol} can be interpreted as the minimum number of traversed columns for all perfect matchings of the same graph.
\end{corollary}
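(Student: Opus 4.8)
The plan is to derive Corollary~\ref{cor:min} directly from the two lemmas already established, with Lemma~\ref{lem:minmatch} supplying the first assertion almost verbatim and Lemma~\ref{thm:minpath} supplying the geometric reading needed for the second.

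First I would observe that the output of the \ac{MWPM} decoder on the graph that includes the ghost defects is, by construction, a perfect matching $\mathcal{M}^\star$ of that graph, and that it decodes exactly the syndrome responsible for the even set of site defects to which Lemma~\ref{lem:minmatch} applies. Hence Lemma~\ref{lem:minmatch}, invoked for $\mathcal{M}^\star$ and for an arbitrary perfect matching $\mathcal{M}$ of the same graph, gives $f(\mathcal{M}) = f(\mathcal{M}^\star)$; in particular $f(\mathcal{E}_1) = f(\mathcal{E}_2) = f(\mathcal{M}^\star)$. This settles the first sentence of the corollary.

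For the second sentence I would first make precise what ``number of traversed columns'' means for a correction chain, and then show that $f$ computes its minimum over a stabilizer-equivalence class. Fix a perfect matching $\mathcal{M}$ and, for each matched pair $(v_1,v_2)$, choose the minimal-weight $\M{Z}$ chain joining them; by Lemma~\ref{thm:minpath} this chain meets each column strictly between $v_1$ and $v_2$ in exactly one horizontal qubit and meets no other column, so it traverses precisely the columns between its two endpoints. Superposing these chains over all pairs of $\mathcal{M}$ and reducing modulo $2$ yields exactly the vector $\V{u}$ of \eqref{eq:metricCol}: $u_j = 1$ iff column $j$ is crossed a net-odd number of times, and $f(\mathcal{M}) = \sum_j u_j$ counts such columns. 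Since any plaquette generator contributes two horizontal edges lying in a single column (the observation already used in the proof of Lemma~\ref{thm:minpath}), applying stabilizers leaves $\V{u}$, hence $f(\mathcal{M})$, unchanged; thus $f(\mathcal{M})$ is a lower bound on the number of columns traversed by any chain equivalent, up to $\mathcal{S}$, to the correction associated with $\mathcal{M}$, and this bound is attained by the minimal representative just constructed. Combining with the invariance from Lemma~\ref{lem:minmatch}, this minimum is common to all perfect matchings of the graph, which is precisely the claimed interpretation of \eqref{eq:metricCol}.

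The main obstacle I anticipate is boundary bookkeeping rather than anything deep: one must handle the two ``ghost'' columns $0$ and $d+1$ consistently, reading an edge that runs to a ghost defect as a partial traversal that combines with its partner exactly as in the $j_1 \le j_2 \le j_3 \le j_4$ accounting inside the proof of Lemma~\ref{lem:minmatch}, and one should confirm that the graph on which the \ac{MWPM} actually runs — with zero-weight ghost--ghost edges — has the same perfect matchings, as far as the column metric is concerned, as the one implicitly fixed by the syndrome, so that ``perfect matchings of the same graph'' is unambiguous. A second, smaller point is to verify that the per-edge minimal chain genuinely attains the column lower bound, which is exactly where Lemma~\ref{thm:minpath} is indispensable; once these are pinned down, the corollary is a short consequence of the two lemmas.
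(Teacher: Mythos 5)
Your argument is correct and follows essentially the same route the paper intends: the corollary is stated without an explicit proof precisely because it is an immediate consequence of Lemma~\ref{lem:minmatch} once one notes that the \ac{MWPM} output is itself a perfect matching of the graph, so the common value of $f$ is in particular the minimum over all perfect matchings. Your additional unpacking of the second sentence via Lemma~\ref{thm:minpath}, and your flag about the zero-weight ghost--ghost edges in the \ac{MWPM} graph, supply detail the paper leaves implicit but do not change the argument.
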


We have seen in Section~\ref{sec:MST} that, starting from a complete graph $\mathcal{G}$ formed by connecting all defects, we obtain two distinct graphs $\mathcal{G}_1$ and $\mathcal{G}_2$ by adding ghost ancillas. Due to the even number of defects in these graphs, we can compute two perfect matchings, denoted as $\mathcal{E}_1$ and $\mathcal{E}_2$. It is noteworthy that applying a logical operator to one matching we get the other. Consequently, one of the solutions will correct the error (call it $\mathcal{E}_c$), while the alternative, $\mathcal{E}_a$, will produce a logical error.

\begin{theorem}
\label{th:MinTravDistBelowT}
    Let us consider an $[[n,k,d]]$ surface code and an error pattern of weight $w \le t$. A perfect matching leading to the correct solution, $\mathcal{E}_c$, 
    has $f(\mathcal{E}_c) < f(\mathcal{E}_a)$, with $\mathcal{E}_a$ being any perfect matching on the alternative graph.
\end{theorem}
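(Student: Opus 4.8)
The plan is to reason about the relationship between the two candidate matchings $\mathcal{E}_c$ and $\mathcal{E}_a$ and the true error chain. First I would recall the setup: the actual channel error, having weight $w \le t$, produces a set of site defects on the lattice which, together with a single ghost defect (odd case) or a pair of ghosts (even case), is completed to an even-cardinality vertex set in two ways — one ghost placement to the left giving $\mathcal{G}_1$ and one to the right giving $\mathcal{G}_2$. By Corollary~\ref{cor:min}, $f(\mathcal{E}_i)$ depends only on the graph $\mathcal{G}_i$ and equals the number of columns traversed an odd number of times by \emph{any} perfect matching on that graph; in particular we may evaluate it on the minimum-weight perfect matching of each graph. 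So the theorem reduces to a statement purely about these two minimum-weight matchings.

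Next I would identify $\mathcal{E}_c$ explicitly. Since the true error $\mathcal{C}$ has weight $w \le t < d/2$, it cannot contain a logical operator (which has weight $\ge d$), so it consists of chains each connecting a pair of genuine defects or connecting a defect to the relevant boundary; this collection is itself a valid perfect matching on one of the two graphs, say $\mathcal{G}_1$, and its metric satisfies $f(\mathcal{E}_c) = f(\mathcal{C}) \le w \le t$, because by Lemma~\ref{thm:minpath} each chain contributes at most its horizontal span, and the total number of odd-count columns is bounded by the total weight. The key point is then to lower-bound $f(\mathcal{E}_a)$. The matching $\mathcal{E}_a$ lives on the alternative graph $\mathcal{G}_2$; composing $\mathcal{E}_c$ with $\mathcal{E}_a$ (symmetric difference, viewed as Pauli $\M{Z}$ chains) yields a chain with trivial syndrome that, by the observation preceding the theorem, realizes a logical $\M{Z}$ operator — hence crosses every column an odd number of times, so the corresponding $\V{u}$-vector is all-ones and its metric is exactly $d$. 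Using the fact that the column-parity vectors add modulo 2 under composition, $\V{u}(\mathcal{E}_a) = \V{u}(\mathcal{E}_c) \oplus \V{u}_{\mathrm{logical}} = \V{u}(\mathcal{E}_c) \oplus \mathbf{1}$, so for every column $j$ exactly one of $\mathcal{E}_c,\mathcal{E}_a$ has odd count. Therefore $f(\mathcal{E}_c) + f(\mathcal{E}_a) = d$.

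Finishing is then immediate: since $f(\mathcal{E}_c) \le t$ and $d = 2t+1$, we get $f(\mathcal{E}_a) = d - f(\mathcal{E}_c) \ge 2t+1 - t = t+1 > t \ge f(\mathcal{E}_c)$, which is the claimed strict inequality $f(\mathcal{E}_c) < f(\mathcal{E}_a)$.

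The main obstacle I anticipate is making the ``composition is a logical operator, hence crosses every column'' step fully rigorous, i.e.\ verifying that $\mathcal{E}_c \triangle \mathcal{E}_a$ — where $\mathcal{E}_c$ lives on the left-ghost graph and $\mathcal{E}_a$ on the right-ghost graph — is genuinely a nontrivial logical $\M{Z}$ chain rather than a stabilizer or a trivial chain. This requires carefully tracking how the two differing ghost placements contribute boundary segments: the two ghost choices differ by a horizontal path from the left boundary to the right boundary, which is precisely the support of $\M{Z}_L$, so the symmetric difference necessarily carries that logical string. Once this bookkeeping with ghost defects and boundary edges is handled cleanly, invoking Lemma~\ref{thm:minpath} (applied column-by-column to the logical chain) and Corollary~\ref{cor:min} (to pass from minimum-weight matchings to arbitrary ones) closes the argument.
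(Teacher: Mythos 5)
Your proposal is correct and follows essentially the same route as the paper's proof: bound $f(\mathcal{E}_c)\le t$ via the error weight and Lemma~\ref{lem:minmatch}/Corollary~\ref{cor:min}, establish $f(\mathcal{E}_c)+f(\mathcal{E}_a)=d$ because the two solutions differ by a logical operator, and conclude from $d=2t+1$. Your version is somewhat more explicit than the paper's (which simply asserts $f(\mathcal{E}_a)=d-f(\mathcal{E}_c)$ ``since they differ by a logical operator''), in particular in spelling out the mod-2 addition of column-parity vectors and the ghost-defect bookkeeping behind that step.
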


\begin{proof}
    The number of columns in the lattice is $d$.
    Since we are considering error patterns with weight $w \le t$, we have that any perfect matching $\mathcal{E}_\mathrm{c}$ representing the correct solution, due to Lemma~\ref{lem:minmatch}, satisfies $f(\mathcal{E}_\mathrm{c}) \le t$.
    Regarding the alternative solution $\mathcal{E}_\mathrm{a}$, we have that $f(\mathcal{E}_\mathrm{a}) = d - f(\mathcal{E}_\mathrm{c})$, since they differ by a logical operator. 
    This leads to $f(\mathcal{E}_\mathrm{a}) \ge t+1$ for $d$ odd and $f(\mathcal{E}_\mathrm{a}) \ge t+2$ for $d$ even,  proving the statement.
\end{proof}
    The previous theorem states that a surface decoder chosing a matching with the minimum traversed column metric \eqref{eq:metricCol} preserves the error correction capability of the code. 

    The same theorem applies to rotated surface codes as well, owing to their similar lattice structure.


\section{Rapid-fire decoder}\label{sec:RFDec}

We show now how to design an even faster decoder that ensures the correction for all errors with weights $w \leq t$. 
Due to Corollary~\ref{cor:min}, each matching is equivalent to the minimum one in terms of \eqref{eq:metricCol}.
Hence, it is possible to compute $\mathcal{E}_i$ with $i=1,2$ without evaluating and matching the \ac{MST}. 
Firstly, we compute the complete graph $\mathcal{G}$ on the defects. 
Then, we construct two graphs $\mathcal{S}_i$ by adding to $\mathcal{G}$ ghost ancillas on boundaries with the same strategy described in Sections~\ref{sec:MST}.
For each $\mathcal{S}_i$, we iteratively pair each defect with its closest one in a greedy fashion, we update the solution $\mathcal{E}_i$, and we remove both the defects from $\mathcal{S}_i$. 
This process results in two potential sets of faulty qubits, $\mathcal{E}_1$ and $\mathcal{E}_2$. 
Finally, we determine the error correction operator to adopt, following the procedure outlined in Section~\ref{sec:CEC}.
In this way, we guarantee the correction of all errors of weight up to $w = t$ by Theorem~\ref{th:MinTravDistBelowT}. 
We call this the \ac{RFire} decoder. 

\section{Numerical Results}\label{sec:NumRes}

\begin{figure}[t]
	\centering
	\resizebox{0.49\textwidth}{!}{ 
%
%
\definecolor{mycolor1}{rgb}{0.00000,0.44700,0.74100}%
\definecolor{mycolor2}{rgb}{0.85000,0.32500,0.09800}%
\definecolor{mycolor3}{rgb}{0.92900,0.69400,0.12500}%
\definecolor{mycolor4}{rgb}{0.49400,0.18400,0.55600}%
\definecolor{mycolor5}{rgb}{0.46600,0.67400,0.18800}%
\definecolor{mycolor6}{rgb}{0.30100,0.74500,0.93300}%
\begin{tikzpicture}

\begin{axis}[%
name = plot,
width=4.5in,
height=3.5in,
at={(0in,0in)},
scale only axis,
xmode=log,
xmin=0.001,
xmax=0.06,
xminorticks=true,
xlabel style={font=\color{white!15!black}, font = \Large},
xlabel={ $p$},
tick label style={black, semithick, font=\Large},
ymode=log,
ymin=1e-06,
ymax=0.1,
yminorticks=true,
ylabel style={font=\color{white!15!black}, font = \Large},
ylabel={ $p_L$},
axis background/.style={fill=white},
legend style={legend cell align=left, align=left, draw=white!15!black}
]
\addplot [color=brightRed, line width=1.5pt, mark=o, mark size = 2.5pt, mark repeat = 2, mark phase = 2 mark options={solid, brightRed}]
  table[row sep=crcr]{%
0.001	2e-05\\
0.005	0.00049\\
0.008	0.00118\\
0.01	0.00184\\
0.05	0.03946\\
}; \label{mst:3x3}

\addplot [color=brightRed, line width=1.5pt, dashed]
  table[row sep=crcr]{%
0.001	1.64e-05\\
0.005	0.00039113\\
0.008	0.000997308\\
0.01	0.001562046\\
0.05	0.0352298\\
}; \label{mst9:3x3}

\addplot [color=brightRed, line width=1.5pt, mark=square, mark size = 2.5pt, mark options={solid, brightRed}]
  table[row sep=crcr]{%
0.003	4.94286e-06\\
0.005	2.15675e-05\\
0.008	9.10919e-05\\
0.01	0.000172973\\
0.05	0.026\\
}; \label{mst:5x5}

\addplot [color=brightRed, line width=1.5pt, mark=diamond, mark size = 4pt, mark options={solid, brightRed}]
  table[row sep=crcr]{%
0.005   2.1e-6\\
0.008	2e-05\\
0.02	0.001\\
0.05	0.04\\
};  \label{mst:7x7}

\addplot [color=brightBlue, line width=1.5pt, mark=o, mark size = 2.5pt, mark repeat = 2, mark options={solid, brightBlue}]
  table[row sep=crcr]{%
0.001	2e-05\\
0.005	0.00046\\
0.008	0.00114\\
0.01	0.0017\\
0.05	0.03819\\
};  \label{mwpm:3x3}

\addplot [color=brightBlue, line width=1.5pt, dashed]
  table[row sep=crcr]{%
0.001	1.54e-05\\
0.005	0.0003926\\
0.008	0.0010174\\
0.01	0.0015618\\
0.05	0.0348174\\
}; \label{mwpm9:3x3}

\addplot [color= graphYellow, line width=1.5pt, dashed]
  table[row sep=crcr]{%
0.001	1.7e-05\\
0.005	0.000427084\\
0.008	0.001063158\\
0.01	0.001675\\
0.05	0.0357\\
}; \label{NOmst9:3x3}

\addplot [color=brightBlue, line width=1.5pt, mark=square, mark size = 2.5pt, mark options={solid, brightBlue}]
  table[row sep=crcr]{%
0.003	3.8e-06\\
0.005	1.8e-05\\
0.008	7.1e-05\\
0.01	0.000124\\
0.05	0.01565\\
};  \label{mwpm:5x5}

\addplot [color=brightBlue, line width=1.5pt, mark=diamond, mark size = 4pt, mark options={solid, brightBlue}]
  table[row sep=crcr]{%
0.008	3e-06\\
0.02	0.00015\\
0.05	0.0063\\
};  \label{mwpm:7x7}

\addplot [color=graphYellow, line width=1.5pt, mark=o, mark size = 2.5pt, mark options={solid, graphYellow}]
  table[row sep=crcr]{%
0.001	2.2e-05\\
0.005	0.0005025\\
0.008	0.00127778\\
0.01	0.00190909\\
0.05	0.0406\\
};  \label{NOmst:3x3}

\addplot [color=graphYellow, line width=1.5pt, mark=square, mark size = 2.5 pt,  mark options={solid, graphYellow}]
  table[row sep=crcr]{%
0.003	0.62e-05\\
0.005	2.7e-05\\
0.008	0.000108\\
0.01	0.000205\\
0.05	0.0312\\
}; \label{NOmst:5x5}

\addplot [color=graphYellow, line width=1.5pt, mark=diamond, mark size = 4 pt, mark options={solid, graphYellow}]
  table[row sep=crcr]{%
0.005   5.73298e-06 \\
0.008	3.41702e-05\\
0.02	0.00138\\
0.05	0.0462\\
}; \label{NOmst:7x7}

\end{axis}

\node[draw,fill=white,inner sep=1.5pt, below right=0.5em] at (plot.north west){
\large
  {\renewcommand{\arraystretch}{1.2}
    \begin{tabular}{lccc}
     & \textbf{RFire} &  \textbf{STM} & \textbf{MWPM} \\
     \textbf{$[[9,1,3]]$ } & \ref{NOmst9:3x3} & \ref{mst9:3x3} & \ref{mwpm9:3x3} \\
     \textbf{$[[13,1,3]]$ } & \ref{NOmst:3x3} & \ref{mst:3x3} & \ref{mwpm:3x3} \\
    \textbf{$[[41,1,5]]$} & \ref{NOmst:5x5} & \ref{mst:5x5} & \ref{mwpm:5x5} \\
    \textbf{$[[85,1,7]]$ }  & \ref{NOmst:7x7} & \ref{mst:7x7} & \ref{mwpm:7x7} \\
    
  \end{tabular}
}};

\end{tikzpicture}%
	} 
	\caption{ Logical error probability vs. physical error probability of the channel. Rotated $[[9,1,3]]$, and standard $[[13,1,3]]$, $[[41,1,5]]$, and $[[85,1,7]]$ surface codes over depolarizing channel.
		\label{Fig:plot_MST}}
\end{figure}
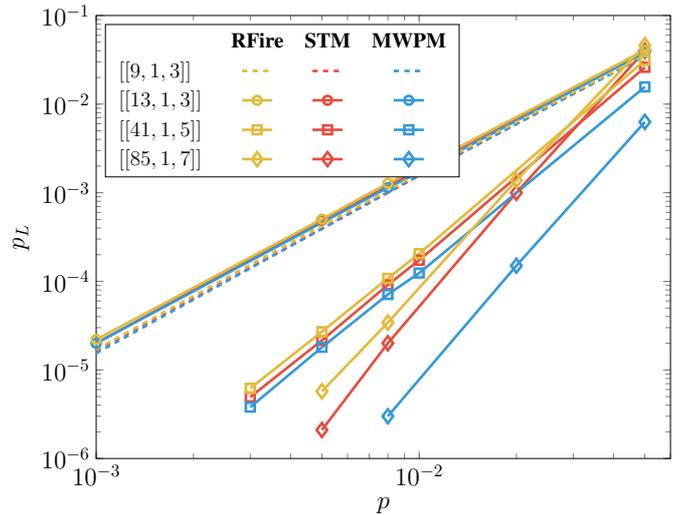

\begin{table*}[t]
    \centering
    \setlength{\tabcolsep}{3pt}
    \caption{AVERAGE EXECUTION TIMES $[\mu s]$}
    \label{tab:times}
    \small
    \begin{tabular}{lC{1.5cm}C{1.5cm}C{1.5cm}C{1.5cm}C{1.5cm}C{1.5cm}C{1.5cm}C{1.5cm}C{1.5cm}C{1.5cm}C{1.5cm}}
        \toprule 
        \rowcolor[gray]{.95}
        & & $n_\mathrm{d}=1$ & $n_\mathrm{d}=2$ & $n_\mathrm{d}=3$ & $n_\mathrm{d}=4$ & $n_\mathrm{d}=5$ & $n_\mathrm{d}=6$ & $n_\mathrm{d}=7$ & $n_\mathrm{d}=8$ \\
        \midrule
        \textbf{RFire} &  & $0.0008$ &$0.0015$ &$0.0021$ & $0.0025$ & & \\
        \textbf{STM} & $[[13,1,3]]$ & $0.0025$ &$0.0031$ &$0.0051$ & $0.0123$ & & \\
        \textbf{MWPM} &  & $17.329$ &$30.618$ &$62.254$ & $63.653$ & & \\
        \midrule
        \textbf{RFire} &  & $0.0019$ & $0.0044$ & $0.0051$& $0.0081$ & $0.0085$ & $0.0108$ & $0.0110$& $0.0341$  \\
        \textbf{STM} & $[[85,1,7]]$ & $0.0031$ & $0.0047$ & $0.0091$& $0.0121$ & $0.0695$ & $0.2071$ & $1.2839$& $1.9302$  \\
        \textbf{MWPM} &  & $19.156$ & $32.247$ & $55.864$& $67.591$ & $100.55$ & $115.04$ & $157.67$& $171.78$  \\
        \bottomrule
    \end{tabular}
\end{table*}

In this section we compare the performance of surface and rotated surface codes using \ac{MWPM}, \ac{STM}, and \ac{RFire} decoding via Monte Carlo simulations. 
These decoders are implemented in C++ and run with an Apple Silicon M2 processor.
In the implementation, we exploit the \ac{LEMON} C++ library for an efficient \ac{MWPM} algorithm \cite{DezBalJut:11}. 
To evaluate the complexity of the decoders we measure the average execution time of the processing that starts with the initial complete graph $\mathcal{G}$ (generation of $\mathcal{G}$ not included) and ends returning the solution $\mathcal{E}$.
The evaluation, carried out when varying the lattice size and the number of defects, is reported in Table~\ref{tab:times}.
As expected, the computational time saving is remarkable when \ac{STM} and \ac{RFire} are adopted.
Furthermore, in Fig.~\ref{Fig:plot_MST}, the logical error rate of some surface codes over a depolarizing channel is depicted.
In particular, the performance gap between \ac{MWPM} and the proposed fast decoders widens progressively as the distance of the code increases. This is due to the fact that, although the \ac{STM} and the \ac{RFire} decoders effectively correct all the errors with weights up to $t$, the \ac{MWPM} decoder corrects a larger fraction of those with weight $t+1$. 
In terms of speed, the \ac{RFire} decoder is the fastest, but it looses in error correction capability. 
Note that although we showcased our decoder using standard surface codes for the sake of clarity, its performance remains consistent when applied to rotated surface types, as illustrated in Fig.~\ref{Fig:plot_MST} for the $[[9,1,3]]$ code. 
Among the decoders available in the literature, we concentrate on comparing with the only current fast decoder, which is the \ac{UF} \cite{Iol:24}. To accomplish this, we utilized the Qsurface library \cite{Iol:24}. Specifically, for the $[[85,1,7]]$ surface code and a physical error rate $p_Z = 0.03$, the \ac{UF} shows a speedup by a factor of approximately $\cross 2$ with respect to \ac{MWPM}. 
In the same settings, the speedup of \ac{RFire} and \ac{MST} decoders with respect to \ac{MWPM} is $\cross 3625$, and $\cross 164$, resulting in the quickest execution times.

\section{Conclusions}\label{sec:conclusions}

We have introduced two fast decoders designed for surface codes and compared their performance with the widely used \ac{MWPM} decoder. While observing a slight reduction in error correction capability for errors of weight $j \geq t + 1$, our findings highlight a substantial advantage in terms of execution times, with a speedup of a factor $\cross 10.000$, for both the \ac{STM} and the \ac{RFire} decoders.



\bibliographystyle{IEEEtran}
\bibliography{Files/IEEEabrv,Files/StringDefinitions,Files/StringDefinitions2,Files/refs}


\end{document}